\documentclass[conference]{IEEEtran}

\usepackage[T1]{fontenc}
\usepackage{graphicx}
\usepackage{booktabs}
\usepackage{multirow}
\usepackage{amsmath,amssymb,amsthm}
\usepackage{enumitem}
\usepackage{hyperref}
\usepackage{xcolor}
\usepackage{colortbl}
\usepackage{tikz}
\usepackage{setspace}

\usepackage{stackengine}
\newcounter{sharedtabfn}

\usepackage{subcaption}
\usepackage{array}

\usepackage{fancyhdr}
\newenvironment{thisnote}
{\par\noindent\texttt\bgroup}
{\egroup\par}
\usepackage{cite}
\usepackage{comment}
\usetikzlibrary{arrows.meta,positioning,shapes,fit}

\usepackage{threeparttable}

\usepackage[norelsize,linesnumbered,ruled,lined,boxed,commentsnumbered,noend]{algorithm2e}
\SetKwInput{KwInput}{Input}
\SetKwInput{KwOutput}{Output}

\newtheorem{theorem}{Theorem}

\title{LLM-Based Multi-Agent Collaboration for Constrained Multi-Objective Container Placement}
\author{
\IEEEauthorblockN{
Suvarthi Sarkar\textsuperscript{1},
Abhishek Hazra\textsuperscript{2},
Akshat Mittal\textsuperscript{3},
Aryabartta Sahu\textsuperscript{1}
}
\IEEEauthorblockA{
\textsuperscript{1}Dept. of CSE, IIT Guwahati, India \quad
\textsuperscript{2}MIDC Lab, NIT Agartala, India \quad
\textsuperscript{3}Alumnus, IIT Guwahati, India
\\
}
\IEEEauthorblockA{
Email: s.sarkar@iitg.ac.in, ah.cse@faculty.nita.ac.in, m.akshat@alumni.iitg.ac.in, asahu@iitg.ac.in
}
}\date{}

\begin{document}

\maketitle
\thispagestyle{fancy} % CRITICAL: Forces the header onto page 1

% ============================================================
\begin{abstract}
Container placement in data centers must simultaneously minimize power consumption and maximize affinity preferences, while satisfying multi-resource capacity and anti-affinity constraints. Traditional approaches typically rely on fixed rules, which lack adaptability to dynamic cluster states and are difficult to extend for adaptive decision-making. On the other hand, meta-heuristic methods, although more flexible, are often computationally expensive, slower and prone to getting trapped in local optima. 
In this work, we propose an adaptive and efficient approach based on an LLM-driven multi-agent collaboration framework, where four specialized agents operate in a closed-loop ReAct cycle at each placement step. A Power Consumption Agent and an Affinity Agent debate over competing objectives, while a Placement Agent resolves conflicts using Tolerance-Gated Arbitration (TGA). A Rearrangement Agent further refines decisions through the Monotone Improvement Rule (MIR) in a post-placement refinement phase. All the agents reason over deterministic, feasibility-filtered candidate tables provided by the environment, ensuring that every proposed action inherently satisfies hard constraints. Based on experimental evaluation using the Google Cluster
Trace with a configuration of 100 applications and 25
machines, the proposed framework consumes 8.36\% less power and achieves 39.30\% higher affinity than the power-greedy baseline, 7.12\% less power and 10.85\% higher affinity than the best domain-specific heuristic (HOP-CAPM), and 4.26\% less power and 1.83\% higher affinity than the meta-heuristic approach (CPAAP), averaged over three independent runs. Ablation studies confirm that each architectural component, the multi-round debate mechanism, the TGA in the Placement Agent,
and the MIR in the Rearrangement Agent, contributes meaningfully to the overall performance, with removal of any single component degrading both power consumption and affinity satisfaction.
\end{abstract}

\begin{IEEEkeywords}
Large Language Models, Multi-Agent Systems, Container Placement, Multi-Objective Optimization
\end{IEEEkeywords}
% ============================================================
\section{Introduction}

Modern data centers host large-scale containerized applications that are deployed and managed through orchestration platforms. In such systems, containers (referred to in this work as \emph{applications}) are placed onto physical machines, where they typically run for extended durations. As a result, placement decisions have long-term impact on energy consumption, resource utilization, and overall system efficiency \cite{sarkar,ilp}.

Operators increasingly aim to reduce power consumption as they consume more than 50\% of the data center costs \cite{cubic_power_model}. Along with power consumption reduction they also focus to placement preferences such as \emph{affinity} and \emph{anti-affinity}. Affinity represents performance or compliance benefits when certain applications are co-located or assigned to specific machines (e.g., due to hardware features, bandwidth, GPU availability, reliability history, or policy constraints). These are treated as soft constraints: violations are allowed but reduce performance or payoff. In contrast, anti-affinity represents strict incompatibility (e.g., missing hardware/software requirements, legal restrictions, security constraints) and must be treated as hard constraints. Applications cannot be placed on forbidden machines. Additionally, every placement must satisfy multi-resource feasibility constraints across CPU, memory, I/O, and network bandwidth \cite{sarkar,ilp}.

\begin{figure}[t]
    \centering
    \includegraphics[width=\linewidth]{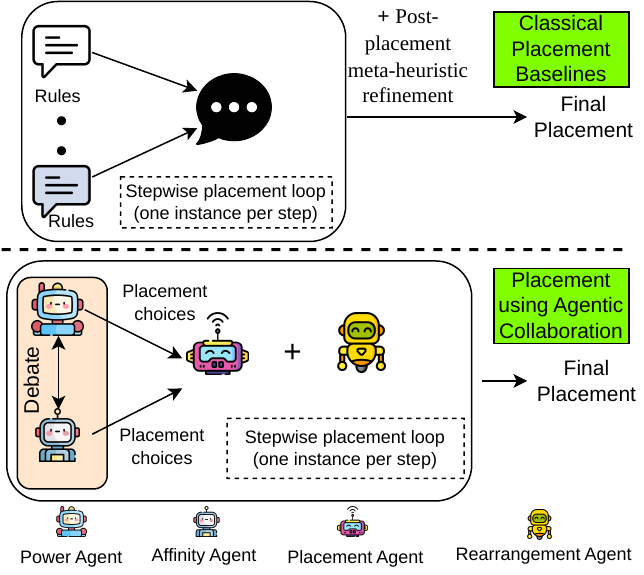}
\caption{\footnotesize \textbf{Comparison between classical placement pipelines and the proposed agentic collaboration framework. Classical approaches proceed in a stepwise fashion and may optionally include a global meta-heuristic refinement stage after the initial placement phase. In contrast, the proposed framework incorporates structured collaboration among multiple agents at each placement step. The \emph{Power Consumption Agent} and the \emph{Affinity Agent} debate and produce candidate placement options, where each option corresponds to assigning the current application to a specific machine. These candidate placements are then sent to the \emph{Placement Agent}, which chooses the most suitable option based on the aggregated evaluations of the agents while strictly satisfying all hard constraints. Following each placement, the \emph{Rearrangement Agent} determines whether local move or swap operations can further improve the objective without compromising feasibility. This collaborative process is repeated for every application instance until the final placement is obtained. Further details of the agents are given in \autoref{sec:agents}.}}
\label{fig:intro_agentic}
\vspace{-0.4cm}
\end{figure}

Power-aware placement under affinity and anti-affinity is fundamentally challenging. First, the decision space grows combinatorially with the number of applications and machines. Second, realistic power models are non-linear (e.g., cubic CPU utilization terms combined with linear resource components), making objective optimization non-trivial. Third, placement is inherently sequential: once an application is assigned to a machine, subsequent decisions must adapt to the existing configuration. Early suboptimal placements can propagate through the system, causing lock-in effects that degrade long-term power–affinity trade-offs.
In multi-objective settings such as this, there is typically no single optimal solution that simultaneously minimizes power and maximizes affinity. Instead, systems seek \emph{Pareto-optimal} solutions. A placement configuration is Pareto-optimal if no other feasible configuration exists that improves one objective (e.g., lower power) without worsening at least one other objective (e.g., affinity). Practical schedulers therefore aim to approach Pareto-efficient trade-offs while respecting strict feasibility constraints ~\cite{deb2001multi}.

Traditional container placement approaches rely primarily on hand-crafted heuristics. These methods typically consist of predefined decision rules, as shown in the top part of \autoref{fig:intro_agentic}. While such heuristics are computationally efficient and easy to implement, but usually do not perform best under all conditions \cite{swain}.
To address this limitation, meta-heuristic techniques have been explored. These approaches improve solution quality by exploring larger neighborhoods and performing iterative refinements after an initial placement \cite{sarkar}. But they are slow and might get stuck to local optima.
More recently, learning-based schedulers have been proposed to automatically learn placement policies from data. However, such methods typically require extensive training, careful tuning of model parameters, and may suffer from inference latency and scalability challenges in large and dynamic cluster environments \cite{dl_prob, rl_prob, single_agent}.

Large Language Models (LLMs) have recently demonstrated strong sequential reasoning capabilities, making them well suited to adaptive scheduling. Unlike fixed-rule heuristics, an LLM-based scheduler can reason jointly over the current cluster state, prior placement history, and multiple competing objectives. Prior work has studied single-LLM schedulers for placement \cite{bai25}, where the model acts as a centralized decision-maker. However, in such settings, a single reasoning process can not simultaneously optimize for both power minimization and affinity maximization, without any explicit architectural mechanism to ensure balanced treatment of these competing goals. In this paper, we investigate whether a multi-agent design can overcome this limitation. We decompose the application placement into specialized objective-specific agents that argue independently, while a neutral arbitrator reconciles their recommendations, as shown in the bottom part of \autoref{fig:intro_agentic}. We transfer this principle from the evaluation domain to constrained multi-objective optimization, recasting container placement as a structured collaboration among agents with explicitly separated roles. Concretely, we propose a four-agent closed-loop framework in which all agents reason over a shared, deterministic candidate table generated by the environment at each placement step. 

The following are the contributions of the work:

\begin{itemize}

  \item We propose a novel multi-agent collaborative approach for bi-objective container placement in which a Power Consumption Agent and an Affinity Agent conduct structured debate, a Placement Agent arbitrates through Tolerance-Gated Arbitration (TGA), and a Rearrangement Agent refines placements through the Monotone Improvement Rule (MIR). All the agents are constrained to deterministic, feasibility-filtered candidate tables, guaranteeing hard-constraint compliance by construction. 
  \item We analyze the proposed framework theoretically. We bounds the power overhead of TGA relative to the power greedy baseline in terms of the tolerance parameter and the MIR recovery mechanism (in Theorem~1). We proves that the affinity achieved by TGA with MIR is at least as high as that of any power-greedy heuristic (in Theorem~2). Together, these results identify the conditions under which the proposed framework Pareto-dominates greedy heuristics on both power consumption and affinity.
    %\item We provide formal guarantees for the approach. We bounds the power overhead of TGA relative to the greedy baseline as a function of the tolerance parameter and MIR recovery (in Theorem 1). We proves that affinity under TGA and MIR is at least as high as under any power-greedy heuristic ( in Theorem 2). Together, these establish conditions under which the framework mostly Pareto-dominates greedy on both objectives. 
   \item We evaluate our propose multi-agent collaborative approach on the Google Cluster trace against four baseline categories: general heuristics (\textit{First Fit}, \textit{Best Fit}, \textit{P-Greedy}), domain-specific heuristics (\textit{HOP-CAPM} ~\cite{swain}), meta-heuristics (\textit{CPAAP} ~\cite{sarkar}), and alternative LLM architectures (single-agent, two-agent, three-agent, confidence scoring). The proposed approach achieves the best power--affinity trade-off across all the approaches. We further provide ablation studies over debate rounds, candidate list size, tolerance, and LLM model choice.
\end{itemize}

\section{Literature Review}

Application-to-machine placement have been extensively studied in cloud and cluster management. Many classical approaches rely on heuristic rules, which are computationally efficient but often rigid under 
dynamic system conditions.

Swain et al.~\cite{swain} proposed \textit{HOP-CAPM}, a heuristic scheduling framework that groups machines and applications into multiple subgroups based on hard constraints. Tasks are prioritized using a predefined set of rules, and each task is assigned to a compatible machine according to those rules. While the method is computationally fast, its reliance on fixed priority rules limits adaptability to changing workload characteristics and dynamic cluster states.
Sarkar et al.~\cite{sarkar} introduced \textit{CPAAP}, which augments a heuristic seed with a genetic algorithm (GA) search to improve placement quality. The approach explores better Pareto trade-offs between competing objectives. 
Although the GA stage can improve solution quality compared to pure heuristics, it introduces significantly higher 
runtime overhead and does not provide convergence guarantees.
Long et al.~\cite{ilp} formulate container placement as an integer linear programming (ILP) problem, referred to as 
GCCS. ILP-based approaches provide optimality guarantees under the assumed model. However, they typically rely on simplified linear power models and scale poorly when the placement problem incorporates nonlinear power functions or large cluster sizes.

Recently, LLMs have been explored for resource management and scheduling tasks. Bai et al.~\cite{bai25} used an LLM to assist Kubernetes scheduling, focusing on single-objective latency optimization. Jadhav et al.~\cite{single_agent} evaluateed LLM-based reasoning for multi-objective HPC scheduling, but treated the model as a single monolithic scorer without structural separation of competing objectives. ChatHPC~\cite{yin2025chathpc} and LLMSched~\cite{zhu2025llmsched} investigated LLM-driven resource management but neither considered multi-agent collaboration nor provides solution quality guarantees. In the broader multi-agent LLM literature, debate-based frameworks have been shown to improve reasoning quality over single-agent designs~\cite{du2024debate, chateval}, role-specialized collaboration has been applied to structured task decomposition~\cite{autogen, chatdev}, and verbal scratchpad memory has been shown to improve iterative decision quality~\cite{reflexion}. 

However, none of these frameworks addressed constrained combinatorial optimization under hard feasibility requirements or provides formal per-step guarantees on objective quality. Our work bridges this gap by adapting structured multi-agent debate from the evaluation domain~\cite{chateval} to placement optimization domain. To the best of our knowledge, this is the first work combining LLM-based multi-agent debate with formal guarantees for multi-objective container placement under non-linear power models and hard anti-affinity constraints.

\begin{figure*}[t]
\centering
\includegraphics[scale=0.8]{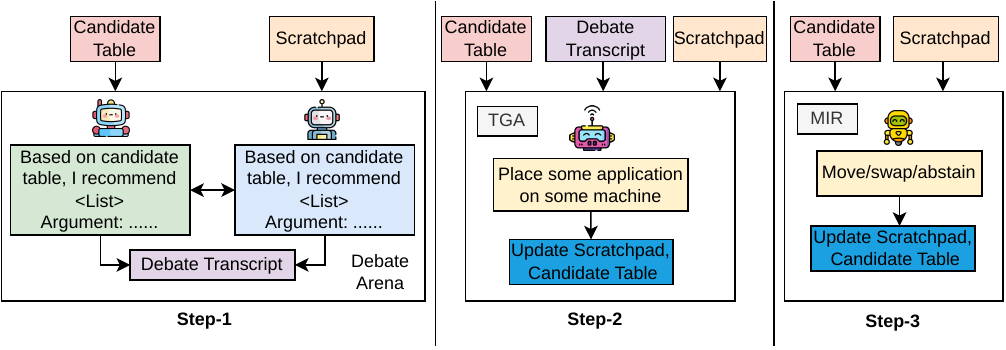}
\vspace{-0.4cm}
\caption{\footnotesize \textbf{Multi-Agent Debate Scheduler architecture showing the
  three-step closed-loop ReAct procedure.
  \textit{Step~1 (Debate):} The Power Consumption Agent and Affinity Agent receive the candidate
  table and scratchpad. They debate for $N$
  configurable rounds, each outputting $K$ ranked recommendations with written arguments.
  \textit{Step~2 (Placement):} The Placement Judge reads the full
  transcript, candidate table, and scratchpad, then select exactly one action. \textit{Step~3 (Rearrangement):} The Rearrangement Agent might swap or move containers for better solution.}}
  \vspace{-0.4cm}
\label{fig:architecture}
\end{figure*}
\nocite{asahu_1,asahu_2}

\section{Problem Formulation}

%\subsection{Problem Formulation}
%\label{sec:formulation}

We model container placement as a multi-stage assignment
problem.
Let $t$ denote the placement step index.  At step $t$, the system
is in state $\mathcal{S}_t$ consisting of:

\begin{itemize}
\item A set of machines
      $\mathcal{M} = \{1, 2, \dots, M\}$.  Each machine $m$ has
      capacity $C_m^x$ for resource dimension
      $x \in \mathcal{R} = \{\textit{cpu}, \textit{mem}, \textit{io},
      \textit{bw}\}$.
\item A set of unplaced container instances set $\mathcal{I} = \{1, 2, \dots, I\}$. Each instance $i$ has demand vector
      $\mathbf{r}_i = (r_i^x)_{x \in \mathcal{R}}$.
\item An affinity matrix $G$ and an anti-affinity matrix $F$,
      both defined over instance--machine pairs.
\end{itemize}

A placement $\pi : \mathcal{I} \to \mathcal{M}$ assigns every
instance to a machine.  We now define the constraints and
objectives that governs $\pi$.

\subsection{Anti-affinity (hard constraint)}
We define a binary matrix $F(i,m) \in \{0,1\}$.  If
$F(i,m) = 1$, instance $i$ is forbidden from machine $m$.  Every
valid placement must satisfy:
\begin{equation}
F\bigl(i,\,\pi(i)\bigr) = 0,
\quad \forall\, i \in \mathcal{I}.
\label{eq:antiaffinity}
\end{equation}

\subsection{Affinity (soft preference)}
We define a binary payoff matrix $G(i,m) \in \{0,1\}$.  If
$G(i,m) = 1$, placing instance $i$ on machine $m$ is preferred.
Violations are allowed but reduce total payoff.  The total
affinity payoff under $\pi$ is:
\begin{equation}
A(\pi) = \sum_{i \in \mathcal{I}} G\bigl(i,\,\pi(i)\bigr).
\label{eq:affinity}
\end{equation}

\subsection{Capacity (hard constraint)}
Each machine $m$ must have enough capacity in every resource
dimension to serve its assigned instances:
\begin{equation}
\sum_{i:\,\pi(i)=m} r_i^x \;\leq\; C_m^x,
\quad \forall\, m \in \mathcal{M},\;
      \forall\, x \in \mathcal{R}.
\label{eq:capacity}
\end{equation}

\subsection{Power Consumption model}
Each machine $m$ has idle power consumption $P_m^{\text{idle}}$ and peak power $P_m^{\text{max}}$. For resource dimension $x \in \mathcal{R}$,
let $L_m^x = \sum_{i:\,\pi(i)=m} r_i^x$ be the total load and
$u_m^x = L_m^x / C_m^x$ the utilization ratio. We collect the
four utilization ratios into a vector
$\mathbf{u}_m = (u_m^x)_{x \in \mathcal{R}}$.

Machine power consumption under placement $\pi$ is
\begin{equation}
P_m(\pi) = P_m^{\text{idle}}
  + \phi(\mathbf{u}_m)
    \bigl(P_m^{\text{max}} - P_m^{\text{idle}}\bigr),
\label{eq:machinepower}
\end{equation}
\nocite{roy_hipc}
where the load factor $\phi$ combines a cubic CPU term with linear
contributions from the remaining resources:
\begin{equation}
\phi(\mathbf{u}_m) = \min\!\Bigl(1,\;\;
  \alpha_1 \bigl(u_m^{\text{cpu}}\bigr)^{\!3}
  + \alpha_2\, u_m^{\text{io}}
  + \alpha_3\, u_m^{\text{bw}}
  + \alpha_4\, u_m^{\text{mem}}
\Bigr).
\label{eq:phi}
\end{equation}
The cubic CPU term dominates: doubling CPU utilization increases
the CPU power consumption component eightfold.  We set
$\alpha_1 = 0.70$, $\alpha_2 = 0.10$, $\alpha_3 = 0.05$, and
$\alpha_4 = 0.15$ following~\cite{cubic_power_model, tarafdar}.
Total cluster power is
\begin{equation}
P(\pi) = \sum_{m \in \mathcal{M}} P_m(\pi).
\label{eq:totalpower}
\end{equation}

\subsection{Objectives}
A placement $\pi$ is \emph{feasible} if it satisfies
~\autoref{eq:antiaffinity}, \ref{eq:capacity}. Let $\Pi$ denote the set of all feasible placements. We solve the
following prioritized bi-objective problem: minimize $P(\pi)$ as the primary objective and maximize $A(\pi)$ as the secondary objective, over $\pi \in \Pi$.

\section{LLM-Based ReAct Multi-Agent Placement Framework} \label{sec:framework}

We design an LLM-driven placement framework for assigning
container instances to machines in a cluster.  The agentic framework operates in a closed loop with a deterministic placement environment.  At each step, it observes the current cluster state, reasons about
power and affinity trade-offs, and outputs a structured placement
action.  The environment enforces all hard constraints and
computes updated metrics after each action.  This separation
ensures correctness while allowing the LLM to focus on
decision-making.

\nocite{asahu_3,asahu_4}
\autoref{fig:architecture} shows the overall architecture of the proposed agents, and Algorithm \ref{alg:placement} summarizes the full workflow. The framework operates as a ReAct-style loop~\cite{react}: at each step, the agents first \textit{reason} over the shared cluster state, comprising the current placement, machine utilizations, and scratchpad history, and then \textit{act} by producing a structured placement action. This separation of reasoning and acting allows each LLM invocation to focus on trade-off evaluation, while the environment strictly enforces all hard constraints and recomputes the updated metrics after every action, thus guaranteeing correctness by construction. The procedure terminates when all instances are placed, no feasible candidate exists, or the judge returns $\texttt{Stop}$.

\begin{algorithm}[t]
\caption{Closed-Loop ReAct Multi-Agent Placement}
\label{alg:placement}

\KwInput{Machines $\mathcal{M}$, instances $\mathcal{I}$, affinity $G$, anti-affinity $F$, $N$: debate rounds , $K$: choices}
\KwOutput{Feasible placement $\pi$}

$t \leftarrow 0$\;
$\pi_t \leftarrow \emptyset$\;

\tcp{Phase 1: Build tables}
$\texttt{C}_0 \leftarrow \text{initialize \textsc{CandidateTable} to NULL}$\;
$\texttt{S}_0 \leftarrow \text{initialize \textsc{Scratchpad} to NULL}$\;

\While{$\exists\, i \in \mathcal{I}$ \textnormal{with instances remaining}}{
    
    \tcp{Phase 2: Debate}
    $\texttt{T}_t \leftarrow \text{\textsc{Transcript} at step } t$\;
    
    \For{$r \leftarrow 1$ \KwTo $N$}{
        $\mathbf{p}_K \leftarrow \textsc{PowerAgent}(\texttt{C}_t,\texttt{S}_t,\texttt{T}_t)$\;
        $\texttt{T}_t \leftarrow \texttt{T}_t \oplus \mathbf{p}_K$ \tcp{\scriptsize{update $\texttt{T}_t$ with its preferred candidate and argument}}
        
        $\mathbf{a}_K \leftarrow \textsc{AffinityAgent}(\texttt{C}_t,\texttt{S}_t,\texttt{T}_t)$\;
        $\texttt{T}_t \leftarrow \texttt{T}_t \oplus \mathbf{a}_K$ \tcp{\scriptsize{update $\texttt{T}_t$ with its preferred candidate and argument}}
    }
    
    \tcp{Phase 3: Placement Agent}
    Select $\texttt{Place}$, $\texttt{Delay}$, or $\texttt{Stop}$ based on $(\texttt{C}_t,\texttt{S}_t,\texttt{T}_t,K)$\;
    Update $\texttt{C}_t$ and $\texttt{S}_t$\;
    
    \tcp{Phase 4: Rearrangement}
    Select $\texttt{Move}$, $\texttt{Swap}$, or $\texttt{Abstain}$ based on $(\texttt{C}_t,\texttt{S}_t)$\;
    Update $\texttt{S}_t$ and $\texttt{C}_t$ if $\texttt{Move}$ or $\texttt{Swap}$\;
    
    $t \leftarrow t+1$\;
}

\Return{$\pi$}
\end{algorithm}

\subsection{Environment-constructed Data Structures}
\label{sec:tables}

At each placement step $t$, the environment holds the current placement $\pi_t$, machine utilizations $\mathbf{u}_m$ for all $m \in \mathcal{M}$, and remaining instance counts. From this
state, it constructs three data structures. 

\subsubsection{Candidate Table}
At each placement step $t$, the environment builds a table of all feasible placement options. It pairs every eligible application with every machine. For each pair $(i, m)$, the environment applies two filters. First, it rejects the pair if $F(i, m) = 1$ (anti-affinity). Second, it rejects the pair if machine $m$ lacks capacity in any resource dimension (\autoref{eq:capacity}). Pairs that pass both
filters are feasible.

For each feasible pair, the environment computes one value:
the incremental power consumption for placing container $i^{th}$ on $m^{th}$ machine ($\Delta P_{i,m}$) from \autoref{eq:machinepower}, \ref{eq:phi}. It then sorts all feasible rows in ascending order of $\Delta P$, since minimizing power is the primary objective of the problem, and breaks ties in descending order of $G$ (as shown in \texttt{line 4} of Algorithm \ref{alg:placement}). The table is updated after every placement (\texttt{line 16, 19}). Suppose instance $i$ is placed on machine $m$ at step $t$. Two things change. First, $\mathbf{u}_{m}$ increases because $m$ now carries more load. This changes $\Delta P_{i, m}$ for every remaining application on that machine. Second, the placed instance $i$ is removed for all the machines from the table.

The updated table has one row fewer per machine. The environment recomputes $\Delta P$ values for that machine where container is placed, and re-sorts. Over the full scheduling run, the candidate table shrinks as containers get placed and machines fill up. Agents select a feasible placement from this table based on its logic. So, they cannot propose an infeasible placement.

\subsubsection{Scratchpad}
The scratchpad is a compact text block that provides trajectory context. It contains: current step $t$, total placements, total swaps, and the last few placement records (each showing instance, machine, $G$, and $\Delta P$).

\subsection{Agent Roles}
\label{sec:agents}

\nocite{asahu_5}
We use four LLM-prompted agents. All operate over the same shared state (as shown in \autoref{fig:architecture}). Each agent outputs a single structured action per invocation. Each agent receives a fixed instruction prompt specific to its role and a set of dynamic inputs from the environment. It makes its decision based on these inputs. 

\subsubsection{Power Consumption Agent} It reads the candidate table, scratchpad and debate argument from Affinity Agent (from second round onwards). It uses the candidate table which is ranked based on $\Delta P$ ascending. It outputs the top-$K$ recommendations and a written argument explaining its reasoning (\texttt{line 10-11}). $\mathbf{p}_K$ represents the preference placement list and corresponding arguments of the Power Agent and $\oplus$ represents the concatenation operation on the transcript.\textsuperscript{\ref{note_2}}

\subsubsection{Affinity Agent}
It reads the candidate table, scratchpad and debate argument from Power Consumption Agent (from second round onwards). It scans the candidate table for applications with $G(i,m) = 1$ and promotes them to the top of its ranking. It breaks ties by lowest $\Delta P$. It also outputs $K$ recommendations and an argument emphasizing co-location benefits (\texttt{line 12-13}). If no $G = 1$ candidate exists, it ranks the candidate by $\Delta P$ alone. Across debate rounds, both the Power Consumption Agent and the Affinity Agent adapt their preference lists in response to each other’s feedback, thereby progressively refining their candidate rankings. $\mathbf{a}_K$ represents the preference placement list and corresponding arguments of the Affinity Agent and $\oplus$ represents the concatenation operation on the transcript.\footnote{\label{note_2} The example prompt is given in our \href{https://anonymous.4open.science/r/AgenticAI-in-Constrained-Multi-Objective-Container-Placement-E685/README.md}{Github Repository}.}

\begin{comment}
    
\begin{table}[t]
\centering
\begin{tabular}{|p{0.45\textwidth}|}
\hline
\cellcolor{green!15}
\begin{thisnote}
You are an affinity-maximizing cluster scheduler. Your ONLY goal is maximum affinity.

At this step, you must choose exactly ONE action:

- Place ONE instance of an app onto ONE machine, OR

- Delay (only if no feasible placement exists), OR

- Stop (only if all instances are scheduled).

\textbf{HARD CONSTRAINTS} (strict, must satisfy now):
Same as \autoref{tab:power_prompt}

\textbf{GOAL:} Maximize affinity (prefer affinity=1 placements). Power is IGNORED.

\textbf{DECISION POLICY} — Affinity-only (strict):
- All rows in the candidate table are already feasible.

- First choose candidates with affinity=1. Among those, pick the one with smallest delta\_total as tiebreaker.

- If no affinity=1 candidate exists, pick any candidate (smallest delta\_total).

- Ignore power minimization completely.

\textbf{Scratchpad:}
\{scratchpad\}

\textbf{Candidate Table:}
\{Candidate Table\}
\end{thisnote}
\\
\hline
\end{tabular}
\caption{Prompt of Affinity Agent}
\label{tab:affinity_prompt}
\end{table}
\end{comment}

\subsubsection{Placement Agent} 

It receives the debate transcript, candidate table, and scratchpad, and applies Tolerance-Gated Arbitration (TGA) to select a placement pair $(i,m)$ (\texttt{line 14-16}). Instead of always choosing the candidate with minimum power increment, it first considers all feasible candidates whose power increment lies within a tolerance gate of the minimum feasible power. Within this gated set, candidates with $G(i,m)=1$ are prioritized, while the final decision is also guided by the recommendations recorded in the debate transcript. The selection rule is given by

{\footnotesize{
\begin{equation}
(i^\star,m^\star)=
\begin{cases}
\arg\min\limits_{\substack{(i,m)\in\mathcal{F}\\
G(i,m)=1,\;
\Delta P_{i,m}\le (1+\varepsilon)\Delta P_{\min}}}
\Delta P_{i,m},
\\
\hfill \text{if such a feasible }(i,m)\text{ exists,}
\\[4pt]
\arg\min\limits_{\substack{(i,m)\in\mathcal{F}\\
\Delta P_{i,m}\le (1+\varepsilon)\Delta P_{\min}}}
\Delta P_{i,m},
\\
\hfill \text{otherwise,}
\end{cases}
\label{eq:tga}
\end{equation}}}

where $\mathcal{F}$ denotes the set of feasible candidate placement pairs $(i,m)$ satisfying all hard constraints, and $\Delta P_{\min}=\min_{(i,m)\in\mathcal{F}} \Delta P_{i,m}$, considering $(i^\star,m^\star)$ is the selected pair. When multiple comparable candidates remain, the final decision is further guided by the debate transcript. When $\varepsilon=0$, the rule reduces to pure power-greedy selection. As $\varepsilon$ increases, more affinity-preferred candidates fall within the gate.\textsuperscript{\ref{note_2}}

\begin{comment}

\begin{table}[t]
\centering
\begin{tabular}{|p{0.45\textwidth}|}
\hline
\cellcolor{red!15} 
\begin{thisnote}    
You are an expert power-aware cluster scheduler. At this step, you must choose exactly ONE action:

- Place ONE instance of an application onto ONE machine, OR

- Delay (only if no feasible placement exists), OR

- Stop (only if all instances are scheduled).

\textbf{HARD CONSTRAINTS} (strict, must satisfy now):
Same as \autoref{tab:power_prompt}

\textbf{GOALS} (consider BOTH; do NOT combine into a single scalar score):

- Minimize incremental cluster power delta\_total (lower is better) (main goal)

- Maximize affinity payoff A in 0 or 1 (1 is better) (soft goal)

\textbf{POWER MODEL}: Same as \autoref{tab:power_prompt}

\textbf{DECISION POLICY:}

- All rows in the candidate table are already feasible.

- Let DPmin = smallest delta\_total among all candidates.

- epsilon (Tolerance)

- If any candidate has affinity=1 AND delta\_total <= DPmin + epsilon:
    choose the affinity=1 candidate with smallest delta\_total
  else:
    choose the candidate with the smallest delta\_total

\textbf{Scratchpad:}
\{scratchpad\}

\textbf{Candidate Table:}
\{Candidate Table\}

\end{thisnote}
\\
\hline
\end{tabular}
\caption{Prompt of Placement Agent}
\label{tab:placer_prompt}
\end{table}
\end{comment}

\subsubsection{Rearrangement Agent}

The Rearrangement Agent performs local refinement of the placement after each action selected by the Placement Agent. After every placement step, the system updates both the scratchpad and the candidate table to reflect the current cluster state. The agent selects at most one operation (\texttt{Move} or \texttt{Swap} or \texttt{Abstain}) according to the Monotone Improvement Rule (MIR) (\texttt{line 17-19}).
The MIR ensures that a rearrangement is accepted only if it improves the placement quality without worsening any objective. Specifically, the selected operation must either reduce power consumption without decreasing affinity, or increase affinity without increasing power. Formally,

\begin{equation}
(\Delta P(o) < 0 \;\wedge\; \Delta A(o) \geq 0) \;\vee\; (\Delta A(o) > 0  \;\wedge\; \Delta P(o) \leq 0 ),
\label{eq:mir}
\end{equation}

where $o$ is the operations taken by the rearrangement agent. Consequently, the rearrangement stage produces a monotonically improving refinement trajectory, gradually improving the placement quality while preserving feasibility.\textsuperscript{\ref{note_2}}

When rearrangement is triggered, the scratchpad and the candidate table can be updated twice in a single step: (a) before placement by the Placement Agent, so that the updated state is available to the debating agents and the Placement Agent (\texttt{line 16}); and (b) after placement, so that it can be used by the Rearrangement Agent (\texttt{line 19}).
\begin{comment}
\begin{table}[t]
\centering
\begin{tabular}{|p{0.45\textwidth}|}
\hline
\cellcolor{blue!15}
\begin{thisnote}
You are an expert post-placement optimizer (Rearrangement Agent).

After the last placement, you may improve the current schedule by ONE operation.

Allowed operations:

1) Move(instance\_id, app\_id, 
from\_machine, to\_machine)

2) Swap(instance\_id1, app\_id1, machine1, instance\_id2, app\_id2, machine2)

3) NoChange

\textbf{HARD CONSTRAINTS} (strict, must satisfy now):
Same as \autoref{tab:power_prompt}

\textbf{POWER MODEL}: Same as \autoref{tab:power_prompt}

\textbf{IMPROVEMENT RULE:}
Accept ONLY if:
  1) dP\_cluster < 0 AND dAff >= 0   (power drops, affinity doesn't worsen), OR
  
  2) dP\_cluster <= 0 AND dAff > 0   (power same or better, affinity improves)

Selection: most negative dP\_cluster first; tie-break by highest dAff; then first row.

ABSOLUTE RULE: If every row has dP\_cluster > 0, output NoChange.

\textbf{Scratchpad:}
\{scratchpad\}

\textbf{Candidate Table:}
\{Candidate Table\}

\end{thisnote}
\\
\hline
\end{tabular}
\caption{Prompt of Rearrangement Agent}
\label{tab:rearranger_prompt}
\end{table}
\end{comment}

\subsection{Debate Protocol}
\label{sec:debate}

The Power Consumption and Affinity agents debate for $N$ configurable rounds. Each round produces two blocks: one from each agent. The transcript accumulates $2N$ blocks total. Each advocate outputs $K$ ranked recommendations per round, along with the argument behind it. The agent counter the opposing agent's argument from the previous round.

\section{Formal Analysis}
\label{sec:formal}

%--------------------------------------------------------------------
\subsection{Heuristics as Fixed-Rule Policies}
%--------------------------------------------------------------------

A rule-based heuristic defines a fixed mapping
$\sigma_H : \pi_t \to (i, m)$ that selects one
candidate from the table at every step $t$.  The mapping is
determined before scheduling begins and does not change during execution. These policies share two structural limitations:

\begin{itemize}
    
\item \textit{Fixed rule:} The weights $(w_1, w_2)$ or the ordering logic is set once. It does not adapt to the changing cluster state. At step $t = 0$ (empty cluster) and step $t$ near $|\mathcal{I}|(=I)$ (nearly full cluster), the heuristic applies the same rule. Even for adaptive heuristics, designing suitable weight adjustments in response to changing cluster conditions can be highly challenging.

\item \textit{No revision:} Once instance $i$ is placed on machine $m$, the assignment $\pi(i) = m$ is permanent. The heuristic do not move or swap instances to correct earlier decisions, as the complexity is huge. Because there might be scenarios where ordering the application selection outputs a better overall placement.
\end{itemize}

%--------------------------------------------------------------------
\subsection{Why the Environment Demands Adaptive Rules?}
%--------------------------------------------------------------------

The candidate table $\texttt{C}_t$ changes at every step.
After placing instance $i$ on machine $m^*$ at step $t$, the utilization $\mathbf{u}_{m^*}$ increases.  Under the cubic power model, the marginal cost of the next placement on $m^*$
is:
\begin{equation}
\frac{\partial \Delta P}{\partial u^{\text{cpu}}}
= 3\alpha_1 (u_{m^*}^{\text{cpu}})^2
  \times (P_{m^*}^{\text{max}} - P_{m^*}^{\text{idle}}).
\label{eq:marginal}
\end{equation}
This grows quadratically with utilization.  At $u = 0.2$, the
sensitivity is $3 \times 0.70 \times 0.04 = 0.084$.  At
$u = 0.8$, it is $3 \times 0.70 \times 0.64 = 1.344$, a
$16\times$ increase.

A fixed rule that was good at $u = 0.2$ may be bad at
$u = 0.8$.  For example, a greedy heuristic based on power consumption (\textit{P-Greedy}) spreads load evenly. Early on, this is
cheap. But it pushes all machines toward moderate utilization where the cubic curve steepens. Later placements face uniformly high $\Delta P$ values with no cheap option left.

An adaptive policy would consolidate early (pack some machines,
keep others empty) to preserve low-utilization machines for
later.  A fixed heuristic cannot make this decision because it
does not reason about trajectory.

%--------------------------------------------------------------------
\subsection{TGA as an Adaptive Policy}
%--------------------------------------------------------------------

TGA (\autoref{eq:tga}) adapts to the environment at every
step through three mechanisms:

\begin{itemize}
     
\item \textit{Dynamic Gate:} The tolerance limit scales with the current best candidate. When machines are lightly loaded, the tolerance limit is small. When machines are heavily loaded, the tolerance limit is large, so the gate widens. The rule automatically adjusts to cluster state.

\item \textit{Debate Transcript:} The Power Consumption and Affinity agents provide arguments based on the current candidate table $\texttt{C}_t$ (at step $t$) and scratchpad $\texttt{S}_t$ (at step $t$). Their reasoning changes at every step because the inputs change. %A heuristic has no equivalent input.

\item \textit{Scratchpad Memory:} The Placement Agent reads the last few placement records. It can detect patterns: ``three consecutive placements on $m_3$, utilization rising fast.''  A heuristic sees only the current table, not the history.
\end{itemize}

\begin{comment}

Formally, the information available to each policy is:
%
\begin{align}
{Inf}_t^{\text{heur}} &= \mathcal{C}_t,
\label{eq:infoheur}\\
{Inf}_t^{\text{TGA}}  &= \mathcal{C}_t \cup
  \mathcal{S}_t \cup \mathcal{T}_t,
\label{eq:infotga}
\end{align}
%
where $\mathcal{S}_t$ is the scratchpad and $\mathcal{T}_t$ is
the debate transcript.  Since
$Inf_t^{\text{heur}} \subset Inf_t^{\text{TGA}}$,
TGA operates on a strict superset of information.
\end{comment}

\subsection{Formal Guarantees of Agentic Collaboration Architecture}

We now prove two results. The first bounds the power cost of
preferring affinity. The second shows affinity never falls
below the greedy baseline.

\paragraph{Notation}
Let $S$ denote the total placement steps.  At step $t$, let
$\Delta P_{\min}^t = \min_{(i,m) \in \mathcal{C}_t}
\Delta P_{i,m}$ be the cheapest candidate in the table.  Let
$|\Delta P(\rho_t)|$ be the power saved by the Rearrangement
Agent at step $t$ (zero if MIR is skipped). Let $P_{\text{greedy}}(S)$ and
$A_{\text{greedy}}(S)$ denote the total power and affinity
after $S$ steps under the power-greedy heuristic (\textit{P-Greedy}), which
selects $\arg\min \Delta P_{i,m}$ at every step with no rearrangement. We compare with \textit{P-Greedy} as minimizing power consumption is the primary objective.

\begin{theorem}[Power consumption overhead bound]
\label{thm:overhead}
The total power under TGA + MIR satisfies:
\begin{equation}
P_{\textup{ours}}(S) \leq P_{\textup{greedy}}(S)
  + \varepsilon \sum_{t=1}^{S} \Delta P_{\min}^t
  - \sum_{t=1}^{S} |\Delta P(\rho_t)|.
\label{eq:overhead}
\end{equation}
\end{theorem}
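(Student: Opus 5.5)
The plan is a per-step telescoping argument. I will decompose the total power under our framework into the increments contributed at each step. After the placement at step $t$ the power rises by $\Delta P^{t}_{\mathrm{TGA}}$, and the rearrangement $\rho_t$ then changes it by $\Delta P(\rho_t)$. Because an empty cluster has the same idle baseline $\sum_m P_m^{\text{idle}}$ under both policies, this gives
\begin{equation*}
P_{\textup{ours}}(S) = \sum_m P_m^{\text{idle}} + \sum_{t=1}^{S} \Delta P^{t}_{\mathrm{TGA}} + \sum_{t=1}^{S} \Delta P(\rho_t).
\end{equation*}
The greedy run decomposes in the same way, $P_{\textup{greedy}}(S) = \sum_m P_m^{\text{idle}} + \sum_{t=1}^{S} \Delta P^{t}_{\mathrm{greedy}}$, with no rearrangement terms.

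The steps are as follows.
\begin{enumerate}
\item \textbf{Placement gate.} By the definition of TGA in \autoref{eq:tga}, both branches restrict to the gated set. Hence $\Delta P^t_{\mathrm{TGA}} \le (1+\varepsilon)\Delta P^t_{\min}$, where $\Delta P^t_{\min}$ is the minimum over the candidate table that our run actually faces at step $t$.
\item \textbf{Rearrangement sign.} By MIR (\autoref{eq:mir}), every accepted operation has $\Delta P(\rho_t)\le 0$. Therefore $\Delta P(\rho_t) = -|\Delta P(\rho_t)|$, and this holds trivially when the agent abstains.
\item \textbf{Summation.} Summing the two bounds gives
\begin{equation*}
P_{\textup{ours}}(S) \le \sum_m P_m^{\text{idle}} + \sum_{t} \Delta P^t_{\min} + \varepsilon\sum_t \Delta P^t_{\min} - \sum_t |\Delta P(\rho_t)|.
\end{equation*}
\item \textbf{Comparison with greedy.} It remains to show $\sum_m P_m^{\text{idle}} + \sum_t \Delta P^t_{\min} \le P_{\textup{greedy}}(S)$.
\end{enumerate}

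The main obstacle is the comparison in the last step. The quantities $\Delta P^t_{\min}$ are computed on our trajectory's tables $\mathcal{C}_t$, not greedy's, and the two trajectories diverge after the first non-greedy choice. Under the cubic model a cheap early choice can make later minima larger, so there is no general pathwise inequality between the two runs.

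My first attempt will be to read the notation in the theorem as the paper sets it up. I will interpret $P_{\textup{greedy}}(S)$ as the power of the greedy reference path measured against the same per-step minima, i.e.\ $\sum_m P_m^{\text{idle}}+\sum_t \Delta P^t_{\min}$, the power a greedy choice would incur at each state encountered. This is exactly the quantity the statement's overhead term is calibrated against, and under it the last step becomes an equality. I will state this reading explicitly as the assumption under which the bound holds.

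If a genuinely cross-trajectory statement is wanted, I would instead try an exchange argument over the common instance set. The idea is to bound each greedy increment from below by the marginal cost of the same instance placed on our final configuration, using convexity of $\phi$ in $u^{\text{cpu}}$. I expect that to require extra assumptions, such as $\phi$ staying below its cap of $1$ and a fixed placement order, so I would present it only as a remark.
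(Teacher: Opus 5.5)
Your argument is the paper's own: the TGA gate bounds each placement increment by $(1+\varepsilon)\Delta P^t_{\min}$, MIR forces $\Delta P(\rho_t)\le 0$, and summing over the steps gives the bound. In your last step you make explicit that greedy is charged exactly $\Delta P^t_{\min}$ computed on your own trajectory's tables; the paper's proof assumes this silently when it says \textit{P-Greedy} ``selects the candidate with $\Delta P_{\min}^t$'' at step $t$, so your caveat about diverging trajectories applies equally to the published argument (the paper's own example already shows different tables at $t=3$, with minima $10.49$ versus $12.85$).
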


\begin{proof}
Let's consider step $t$. The \textit{P-Greedy} selects the candidate with $\Delta P_{\min}^t$. TGA selects a candidate with $\Delta P_{i,m} \leq (1 + \varepsilon)\,\Delta P_{\min}^t$ (\autoref{eq:tga}). The difference between TGA's choice and
\textit{P-Greedy}'s choice is at most:
\begin{equation}
(1 + \varepsilon)\,\Delta P_{\min}^t - \Delta P_{\min}^t
= \varepsilon \cdot \Delta P_{\min}^t.
\label{eq:perstep}
\end{equation}
After placement, MIR applies at most one operation satisfying \autoref{eq:mir}. Any accepted operation has $\Delta P(\rho_t) \leq 0$, reducing total power by $|\Delta P(\rho_t)|$. Summing \autoref{eq:perstep} over all $T$ steps and
subtracting the MIR recoveries:
\begin{equation}
P_{\text{ours}}(S) - P_{\text{greedy}}(S)
\leq \varepsilon \sum_{t=1}^{S} \Delta P_{\min}^t
   - \sum_{t=1}^{S} |\Delta P(\rho_t)|
\end{equation}
Rearranging gives \autoref{eq:overhead}.
\end{proof}

The bound consists of two terms. The first,
$\varepsilon \sum_{t=1}^S \Delta P_{\min}^t$, is the maximum power
that TGA can spend on affinity preferences across the entire
run.  The second, $\sum_{t=1}^S |\Delta P(\rho_t)|$, is the power
that MIR recovers through rearrangement.
%If MIR recovery exceeds TGAspending, i.e. $\varepsilon \sum_{t=1}^S \Delta P_{\min}^t < \sum_{t=1}^S |\Delta P(\rho_t)|$then $P_{\text{ours}}(S) < P_{\text{greedy}}(S)$, then our framework uses less power than greedy.

\begin{theorem}[Affinity dominance]
\label{thm:affinity}
The total affinity under TGA + MIR is at least as high as
under the power-greedy heuristic:
\begin{equation}
A_{\textup{ours}}(S) \geq A_{\textup{greedy}}(S).
\label{eq:affdominance}
\end{equation}
\end{theorem}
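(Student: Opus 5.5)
The plan is to argue step by step, as in the proof of Theorem~\ref{thm:overhead}, and then sum. Write the total affinity as a sum of per-step increments: each placement step contributes $G(i^\star_t,m^\star_t)$, and each MIR operation $\rho_t$ contributes $\Delta A(\rho_t)$. Then
\[
A_{\textup{ours}}(S)=\sum_{t=1}^{S}\bigl(G(i^\star_t,m^\star_t)+\Delta A(\rho_t)\bigr),
\qquad
A_{\textup{greedy}}(S)=\sum_{t=1}^{S}G(i^{g}_t,m^{g}_t),
\]
where $(i^g_t,m^g_t)$ is the \textit{P-Greedy} choice at step $t$. It therefore suffices to show two things: (a) $G(i^\star_t,m^\star_t)\ge G(i^g_t,m^g_t)$ at every step, and (b) $\Delta A(\rho_t)\ge 0$ for every step.

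Part (b) is immediate from \autoref{eq:mir}. Both disjuncts require $\Delta A(o)\ge 0$, and the \texttt{Abstain} action contributes $0$. For part (a), I would split into the two cases of \autoref{eq:tga}. The greedy choice has $\Delta P=\Delta P^t_{\min}\le(1+\varepsilon)\Delta P^t_{\min}$, so it always lies in the gated set.
\begin{itemize}
\item If the gate contains a feasible pair with $G=1$, TGA selects such a pair. Then $G(i^\star_t,m^\star_t)=1\ge G(i^g_t,m^g_t)$, since $G$ is binary.
\item Otherwise every gated pair has $G=0$, and in particular the greedy choice has $G=0$. Hence $G(i^\star_t,m^\star_t)\ge 0=G(i^g_t,m^g_t)$.
\end{itemize}
The tie-breaking by descending $G$ in the candidate table is consistent with this but is not needed, because binarity of $G$ closes both cases. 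Summing (a) and (b) over $t=1,\dots,S$ gives \autoref{eq:affdominance}.

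The main obstacle is that the per-step comparison in (a) is only valid when both policies face the same candidate table $\texttt{C}_t$. Once TGA or MIR deviates, the two trajectories, and hence their tables, diverge. The greedy choice on its own trajectory need not be the greedy choice on ours. My first attempt to deal with this would be to follow the convention implicit in the proof of Theorem~\ref{thm:overhead}. There, \textit{P-Greedy} at step $t$ is interpreted as the coupled greedy action $\arg\min_{(i,m)\in\texttt{C}_t}\Delta P_{i,m}$ evaluated on the current table of our run, and $A_{\textup{greedy}}(S)$ is understood as the sum of these coupled per-step payoffs. I would state this coupling explicitly as the comparison baseline, so that the statement becomes a clean per-step dominance rather than a trajectory-level claim.

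If a genuine trajectory-level comparison is required, I would try an exchange argument that charges each greedy $G=1$ placement to a TGA placement of the same instance. I expect this to fail in general, because capacity interactions can block affinity slots later in the run. For that reason I would present the coupled version as the theorem.
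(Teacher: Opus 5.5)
Your per-step argument is the paper's proof. It uses the same two cases of \eqref{eq:tga}, the same use of \eqref{eq:mir} to get $\Delta A(\rho_t)\ge 0$, and the same summation. One detail to write down: your claim that the greedy pair always lies in the gate needs $\Delta P^t_{\min}\ge 0$, which holds because $\phi$ is nondecreasing in every utilization.

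Where you differ is that you noticed the comparison only makes sense on a common candidate table, and the paper does not. In \eqref{eq:tgaaff} it equates $\sum_t G^t_{\text{greedy}}$ with $A_{\text{greedy}}(S)$. The left side comes from comparing against TGA on TGA's own tables; the right side is the affinity of an independent \textit{P-Greedy} run. These coincide only while the two runs face identical tables, which already fails in the paper's own example at $t=3$.

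Your suspicion that the trajectory-level statement fails is correct, and capacity is exactly the mechanism. Consider the following instance:
\begin{itemize}
\item Machines $m_1,m_2,m_3$, all capacities $1$, dynamic ranges $P^{\max}_m-P^{\text{idle}}_m=1,\,3,\,0.95$ (any common unit), and $\varepsilon=0.1$.
\item Instances with (cpu, mem) demands $a=(0.01,0.6)$ and $b=c=(0.3,0.5)$, with io and bw zero.
\item $G(a,m_1)=G(b,m_1)=G(c,m_1)=1$, all other $G=0$.
\item $F(a,m_2)=F(b,m_3)=F(c,m_3)=1$, all other $F=0$.
\end{itemize}
On the empty cluster the cheapest pair is $a\to m_3$ ($\Delta P\approx 0.0855$, $G=0$). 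The gate ($\approx 0.0941$) contains $a\to m_1$ ($0.0900$) and $b,c\to m_1$ ($0.0939$), all with $G=1$.
\begin{itemize}
\item \textit{P-Greedy} places $a\to m_3$, then $b\to m_1$ ($0.0939$ versus $0.2817$ on $m_2$), then $c\to m_1$ ($0.2073$ versus $0.2817$). So $A_{\text{greedy}}=2$.
\item TGA places $a\to m_1$, the cheapest gated $G=1$ pair. Memory on $m_1$ ($0.6+0.5>1$) then excludes $b$ and $c$, and $m_3$ is forbidden to them, so both land on $m_2$ with $G=0$.
\item MIR cannot repair this. Any operation freeing $m_1$ either moves $a$ to $m_3$ ($\Delta A=-1$) or swaps it onto $m_2$ (forbidden).
\end{itemize}
Hence $A_{\text{ours}}=1<2=A_{\text{greedy}}$. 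The same instance also violates the literal power bound of Theorem~\ref{thm:overhead}.

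So your coupled version is not just a convenient formulation. It is the only statement this argument, yours or the paper's, actually proves, and you should present it as a different and strictly weaker theorem. Its right-hand side is a sum of counterfactual greedy payoffs along your own trajectory, not the affinity of any placement \textit{P-Greedy} produces. It therefore cannot support comparisons with the \textit{P-Greedy} baseline, nor the Pareto-improvement remark the paper draws after the theorem. Drop the exchange argument too: the example shows no charging scheme of that kind can work in general.
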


\begin{proof}
Let's consider step $t$.  The greedy heuristic selects
$\arg\min \Delta P_{i,m}$.  Let $G_{\text{greedy}}^t$ denote
the affinity of greedy's choice, and $G_{\text{TGA}}^t$ the
affinity of TGA's choice.

If no candidate with $G(i,m) = 1$ lies within the gate, TGA
selects $\arg\min \Delta P_{i,m}$, identical to greedy.
Therefore $G_{\text{TGA}}^t = G_{\text{greedy}}^t$.

If a candidate with $G(i,m) = 1$ lies within the gate, TGA
selects it with $G_{\text{TGA}}^t = 1$.  Greedy may select a
candidate with $G_{\text{greedy}}^t = 0$ or $1$.  In either
case, $G_{\text{TGA}}^t \geq G_{\text{greedy}}^t$.

At no step does TGA select a candidate with lower affinity than
power greedy heuristics. After the placement phase, the total affinity from
TGA alone satisfies:
\begin{equation}
\sum_{t=1}^{S} G_{\text{TGA}}^t
\geq \sum_{t=1}^{S} G_{\text{greedy}}^t
= A_{\text{greedy}}(S).
\label{eq:tgaaff}
\end{equation}
During rearrangement, MIR accepts only operations with
$\Delta A(\rho_t) \geq 0$ (\autoref{eq:mir}).  Affinity cannot decrease. It can only increase or stay the same.
Therefore, the final affinity is:
\begin{equation}
A_{\text{ours}}(S)
= \sum_{t=1}^{S} G_{\text{TGA}}^t
  + \sum_{t:\,\Delta A(\rho_t) > 0} \Delta A(\rho_t)
\geq A_{\text{greedy}}(S).
\label{eq:finalaffinity}
\end{equation}
\end{proof}

The proof follows from two facts.  TGA never picks lower
affinity than greedy at any step.  MIR never reduces affinity
during rearrangement.  Together, affinity can only match or
exceed the greedy baseline.

Combining both theorems: if MIR recovers at least what TGA
spends ($\sum_t |\Delta P(\rho_t)| \geq \varepsilon \sum_t
\Delta P_{\min}^t$), then our proposed approach achieves both lower power and higher affinity than greedy. This is a strict Pareto
improvement.

\subsection{Why Greedy Heuristic can not match the Agentic Collaboration Architecture?}

The greedy heuristic is structurally can not replicate
this result. It lacks two capabilities.
First, greedy cannot sacrifice power consumption for affinity gains. Its rule is
$\arg\min \Delta P_{i,m}$ at every step. It has no tolerance
gate. When a $G = 1$ candidate exists at
$\Delta P = (1 + \delta)\,\Delta P_{\min}^t$ for any
$\delta > 0$, greedy ignores it and selects the $G = 0$
candidate at $\Delta P_{\min}^t$.  The affinity opportunity
is lost permanently.  TGA captures it by accepting the bounded
cost $\delta \cdot \Delta P_{\min}^t$.

Second, greedy cannot recover power consumption after placement. Every
assignment $\pi(i) = m$ is final.  If a later placement makes
machine $m$ expensive (due to the cubic model), greedy cannot
move $i$ elsewhere.  The cost is locked in.  MIR can
retroactively correct this by moving $i$ to a cheaper machine,
recovering $|\Delta P(\rho_t)|$ watts.  Greedy has no
rearrangement phase, so its regret accumulates irreversibly.
Together, these two limitations create an asymmetry. So, it outperforms greedy approaches.

\section{Experimental Evaluation}
\label{sec:experiments}

We evaluate the proposed framework along three dimensions: placement quality relative to existing baselines, sensitivity to debate parameters, and token-cost efficiency.\footnote{The code is available at: \href{https://anonymous.4open.science/r/AgenticAI-in-Constrained-Multi-Objective-Container-Placement-E685/README.md}{https://anonymous.4open.science/r/AgenticAI-in-Constrained-Multi-Objective-Container-Placement-E685/README.md}} For the main experiments, we use Llama 3.1-8B \cite{grattafiori2024llama} as the primary LLM. All ablation studies and hyperparameter tuning are conducted with Llama 3.1 on a local GPU server\footnote{Intel(R) Xeon(R) Gold 5318Y CPU @ 2.10, GHz and an NVIDIA A100 80, GB PCIe GPU} using Ollama \cite{ollama2024}. In addition, we evaluate the same architecture with DeepSeek-R1 \cite{guo2025deepseek} using Ollama. We also report results with Claude Opus 4.6 and Claude Sonnet 4.6 \cite{anthropic2025claude} in extended-thinking mode through API-based inference.

%GLM-4.7-Flash \cite{glm2024chatglm}
%Qwen3-Next \cite{yang2025qwen3}
%and GPT-OSS
\subsection{Dataset and Experimental Setup}
We use the Google Cluster workload trace \cite{clusterdata:Wilkes2011} for heterogeneous machine capacities and application resource demands. Since the trace does not include placement preferences, we inject them synthetically using two binary matrices over application-machine pairs: $F$ (hard anti-affinity) and $G$ (soft affinity). We set $30\%$ of pairs as forbidden ($F=1$) and $20\%$ as preferred ($G=1$), sampled randomly with a fixed seed. 

Unless stated otherwise, we use $N{=}3$ debate rounds and allow each debating agent to propose $K{=}3$ candidate list to the Placement Agent. The Placement Agent tolerance is set to $\varepsilon{=}0.10$, meaning that the Placement Agent treats candidate placements with power consumption within 10\% of the minimum candidate power consumption as comparable. The number of machines and applications is fixed at 25 and 100, respectively. To minimize token consumption of LLMs, we set the maximum size of scratchpad to 10, candidate table to 30. All results are reported as improvement or degradation with respect to \textit{P-Greedy}, and the best-performing approach in each case is highlighted in bold in the tables.

%This configuration, evaluated using Llama~3.1, is treated as the reference setting for comparison with all other configurations. For power consumption, a change of $+x\%$ indicates an increase relative to the reference setting and is therefore undesirable, whereas $-x\%$ indicates a reduction and is therefore desirable. In contrast, for affinity satisfaction, $+x\%$ denotes an improvement over the reference setting and is desirable, whereas $-x\%$ denotes a degradation and is undesirable.

The temperature of LLM controls the randomness of token sampling during generation. In general, the temperature parameter ranges from $0$ to $2.0$, depending on the underlying model and inference API. Lower values lead to more deterministic outputs, whereas higher values increase diversity in the generated responses. 
We set the temperature to 0.3, which provides a balance between stable, reproducible outputs and limited exploration in the agents' reasoning process. All reported results are averaged over 3 independent runs. The variance across runs is less than 0.05\% on power consumption and less than 0.10\% for affinity satisfaction, confirming that the framework produces highly stable and reproducible placement decisions under this temperature setting.

\subsection{Comparison against Existing Heuristic Approaches}

We compare the proposed LLM-based agentic collaboration with general heuristics, specialized heuristics, and meta-heuristic approaches. As shown in \autoref{tab:methods}, the proposed framework consistently achieves the best overall balance between power consumption and affinity payoff. General heuristics such as \emph{First Fit}, \emph{Best Fit}, and \emph{P-Greedy} are limited because they rely on simple fixed rules that primarily emphasize feasibility or immediate local gain, without explicitly reasoning about the trade-off between energy efficiency and affinity satisfaction. As a result, their placements are less effective for the underlying multi-objective optimization problem.

Compared with the specialized heuristic \textit{HOP-CAPM}, the proposed method still performs better, indicating that manually designed heuristic rules remain limited when system conditions vary across lightly and heavily loaded machines. The meta-heuristic \textit{CPAAP} narrows the gap by refining placements through a broader search process, but its improvement is still constrained by the quality of the underlying heuristic search space. In contrast, the proposed framework benefits from explicit multi-agent reasoning, where different agents focus on complementary objectives and coordinate to produce better joint placement decisions. Overall, these results show that LLM-based agentic collaboration is more effective than traditional heuristic and meta-heuristic approaches for multi-objective placement.
The proposed approach outperforms \textit{P-Greedy}, \textit{HOP-CAPM} and \textit{CPAAP} by consuming 8.36\%, 7.12\% and 4.26\% power and acheives 39.30\%, 10.85\% and 1.83\% respectively.
\begin{table}[t]
\centering
\caption{\footnotesize \textbf{Comparison with heuristic and LLM-based methods. PC denotes total cluster power consumption (lower is better), and Aff denotes total affinity payoff (higher is better). (\textit{P-Greedy} as base)}}
\label{tab:methods}
\footnotesize
\renewcommand{\arraystretch}{1.03}
\setlength{\tabcolsep}{3pt}
\resizebox{\columnwidth}{!}{%
\begin{tabular}{@{}llcc@{}}
\toprule
\textbf{Category} & \textbf{Method} & \textbf{PC$\downarrow$} & \textbf{Aff$\uparrow$} \\
\midrule
\multirow{3}{*}{General Heuristics}
  & First Fit  & +16.69\% & -12.52\% \\
  & Best Fit   & +5.59\%  & +7.66\% \\
  & P-Greedy   & 0.00\%   & 0.00\% \\
\midrule
Specific Heuristics
  & HOP-CAPM   & -1.34\%  & +25.67\% \\
\midrule
Meta-Heuristics
  & CPAAP      & -4.28\%  & +36.79\% \\
\midrule
Proposed Agentic Collaboration
  & $N=3,\;K=5$ & \textbf{-8.36\%} & \textbf{+39.30\%} \\
\bottomrule
\end{tabular}%
}
\end{table}

\subsection{Example for Illustration}
\label{sec:example_mir}

We illustrate the placement process on a small example with four applications and three machines, using $N{=}1$, $K{=}1$, and $\varepsilon{=}0.10$. \autoref{tab:ex2_all}(A), (B), and (C) show the application demands, the initial machine state with capacities normalized to 1.0 for simplicity, and the affinity matrix $G(i,m)$, respectively.

We compare the power-greedy heuristic (\textit{P-greedy}) with the proposed framework. \textit{P-greedy} selects
$\arg\min_{(i,m)} \Delta P_{i,m}$ at each step, breaking ties using affinity, and performs no rearrangement. The proposed framework uses the Power Consumption Agent and Affinity Agent for debate,
the Placement Agent for TGA-based arbitration (Eq.~\ref{eq:tga}), and the Rearrangement Agent for MIR-based updates (Eq.~\ref{eq:mir}) after each placement. \autoref{tab:ex2_cand} presents the candidate tables across steps.

\begin{table}[t]
\centering
\caption{\footnotesize \textbf{(A) application demands, (B) machine environment (M2 is lightly loaded), (C) affinity matrix, and (D) final placement comparison between \textit{P-greedy} and the proposed approach (affinity satisfied placements are \underline{underlined}).}}
\label{tab:ex2_all}
\footnotesize
\renewcommand{\arraystretch}{1.08}

%==================== Row 1 ====================%
\begin{minipage}[t]{0.42\columnwidth}
\centering
\setlength{\tabcolsep}{4pt}
\begin{tabular}{@{}c|cccc@{}}
\toprule
$i$ & cpu & io & nw & mem \\
\hline
0 & 0.20 & 0.10 & 0.05 & 0.10 \\
1 & 0.25 & 0.10 & 0.05 & 0.10 \\
2 & 0.15 & 0.10 & 0.05 & 0.10 \\
3 & 0.20 & 0.10 & 0.05 & 0.10 \\
\bottomrule
\end{tabular}

\vspace{0.25em}
(A)
\end{minipage}
\hfill
\begin{minipage}[t]{0.54\columnwidth}
\centering
\setlength{\tabcolsep}{3pt}
\begin{tabular}{@{}c|rr|cccc@{}}
\toprule
$m$ & $P^{\text{idle}}$ & $P^{\text{max}}$
& $u^{\text{cpu}}$ & $u^{\text{io}}$
& $u^{\text{nw}}$ & $u^{\text{mem}}$ \\
\hline
0 & 45 & 200 & 0.00 & 0.00 & 0.00 & 0.00 \\
1 & 50 & 200 & 0.00 & 0.00 & 0.00 & 0.00 \\
2 & 55 & 200 & 0.10 & 0.05 & 0.00 & 0.05 \\
\bottomrule
\end{tabular}

\vspace{0.25em}
(B)
\end{minipage}

\vspace{0.7em}

%==================== Row 2 ====================%
\begin{minipage}[t]{0.34\columnwidth}
\centering
\setlength{\tabcolsep}{4pt}
\begin{tabular}{c c|ccc}
\toprule
& & \multicolumn{3}{c}{Machine ID ($m$)} \\
\cline{3-5}
& & m0 & m1 & m2 \\
\cline{1-5}
\multirow{4}{*}{\rotatebox{90}{App ID ($i$)}}
& 0 & 0 & 0 & 1 \\
& 1 & 0 & 0 & 1 \\
& 2 & 0 & 1 & 0 \\
& 3 & 0 & 1 & 0 \\
\bottomrule
\end{tabular}

\vspace{0.25em}
(C)
\end{minipage}
\hfill
\begin{minipage}[t]{0.62\columnwidth}
\centering
\setlength{\tabcolsep}{3pt}
\begin{tabular}{@{}c|cr|cr@{}}
\toprule
& \multicolumn{2}{c|}{P-Greedy} & \multicolumn{2}{c}{Proposed} \\
\cline{2-5}
$m$ & Placed app & $P$ (W) & Placed app & $P$ (W)\\
\hline
0 & 0 & 50.13   & 0,2  & 58.17 \\
1 & 1, \underline{2} & 64.97& \underline{3} & 54.96 \\
2 & 3 & 63.54   & \underline{1} & 65.15\\
\hline
Total & & 178.641 & & 178.28\\
\bottomrule
\end{tabular}
\vspace{0.25em}
(D)
\end{minipage}
\end{table}

\begin{table}[t]
\centering
\caption{\footnotesize \textbf{Candidate tables for the example. The top panel shows the candidate tables at $t{=}0$, $1$, and $2$, where both policies take the same action. The bottom panel shows the candidate tables at $t{=}3$, where \textit{P-greedy} and the proposed approach have different candidate table.}}
\label{tab:ex2_cand}

% ---------- Top row ----------
\scriptsize
\renewcommand{\arraystretch}{1.02}
\setlength{\tabcolsep}{2pt}

\resizebox{\columnwidth}{!}{%
\begin{tabular}{@{}c|rrrr|rrrr|rrrr@{}}
\toprule
\multicolumn{1}{c|}{}
& \multicolumn{4}{c|}{$t = 0$}
& \multicolumn{4}{c|}{$t = 1$ (after a2$\to$m1)}
& \multicolumn{4}{c}{$t = 2$ (after a0$\to$m0)} \\
\cline{2-13}
rk
& app & $m$ & $\Delta P$ & $G$
& app & $m$ & $\Delta P$ & $G$
& app & $m$ & $\Delta P$ & $G$ \\
\midrule
1
& 2 & 1 & 4.48 & 1
& 0 & 0 & 5.13 & 0
& 3 & 2 & 6.63 & 0 \\
2
& 2 & 0 & 4.63 & 0
& 3 & 0 & 5.13 & 0
& 1 & 2 & 8.24 & 1 \\
3
& 3 & 1 & 4.97 & 1
& 0 & 2 & 6.63 & 1
& 3 & 1 & 8.27 & 1 \\
4
& 0 & 1 & 4.97 & 0
& 3 & 2 & 6.63 & 0
& 3 & 0 & 10.34 & 0 \\
5
& 0 & 0 & 5.13 & 0
& 3 & 1 & 8.27 & 1
& 1 & 1 & 10.49 & 0 \\
\bottomrule
\end{tabular}%
}

\smallskip

% ---------- Bottom row ----------
{\normalsize
\renewcommand{\arraystretch}{0.98}
\setlength{\tabcolsep}{3pt}
\begin{tabular}{@{}c|rrrr|rrrr@{}}
\toprule
\multicolumn{1}{c|}{}
& \multicolumn{4}{c|}{$t = 3$ for P-Greedy}
& \multicolumn{4}{c}{$t = 3$ for Proposed} \\
\cline{2-9}
rk
& app & $m$ & $\Delta P$ (W) & $G$
& app & $m$ & $\Delta P$ (W) & $G$ \\
\midrule
1
& 1 & 1 & 10.49 & 0
& 1 & 1 & 12.85 & 0 \\
2
& 1 & 0 & 13.28 & 0
& 1 & 0 & 13.28 & 0 \\
3
& 1 & 2 & 18.13 & 1
& 1 & 2 & 15.09 & 1 \\
\bottomrule
\end{tabular}
}
\end{table}

\subsubsection{Placement trace} \paragraph{Step $t = 0$}
The candidate table (\autoref{tab:ex2_cand}, $t{=}0$) ranks
app2$\to$m1 first.
\textit{P-greedy} selects this row as the minimum $\Delta P$ with affinity tiebreak.
In the proposed framework, the Power Consumption Agent recommends app2$\to$m1 because it minimises power consumption. The Affinity Agent also recommends app2$\to$m1 because it achieves $G = 1$ at the lowest available $\Delta P$. Since both advocates agree, the Placement Agent selects app2$\to$m1 without contention. With only one instance placed, the Rearrangement Agent finds no improving operation and outputs \texttt{NoChange}.
\texttt{Placement state:\;} \textit{P-greedy}: \{a2$\to$m1\}, 
proposed framework: \{a2$\to$m1\}.

\paragraph{Step $t = 1$}
The candidate table (\autoref{tab:ex2_cand}, $t{=}1$) shows
app0$\to$m0 at rank~1. \textit{P-greedy} selects it as $\min \Delta P$.
In the proposed framework, the Power Consumption Agent recommends app0$\to$m0. The Affinity
Agent recommends app0$\to$m2, arguing that the affinity payoff justifies the extra cost.
The Placement Agent reads both arguments, computes the TGA gate ($4.97 \times 1.10 = 5.46$), and observes that the Affinity Agent's candidate at 6.86\,W exceeds the threshold.
It places app0$\to$m0, siding with the Power Consumption Agent. The Rearrangement Agent finds no eligible operation.
\texttt{Placement state:\;} \textit{P-greedy}: \{a2$\to$m1, a0$\to$m0\}, proposed framework: \{a2$\to$m1, a0$\to$m0\}.

\paragraph{Step $t = 2$}
The candidate table (\autoref{tab:ex2_cand}, $t{=}2$) shows app3$\to$m2 at rank~1. P-greedy selects it.
In the proposed framework, the Power Consumption Agent recommends app3$\to$m2. The Affinity Agent recommends app3$\to$m1, noting that machine~1 is the preferred target for app3. The Placement Agent computes the gate ($6.86 \times 1.10 = 7.54$) and finds 8.27 exceeds it. It places app3$\to$m2, identical to power-greedy. The Rearrangement Agent evaluates pairwise swaps among the three placed instances and identifies one eligible operation: swap app2\,(m1) $\leftrightarrow$ app3\,(m2). App2  moves to the more loaded m2, while app3 moves to the less loaded m1. This redistribution reduces cluster power by $0.71$\,W ($\Delta A = 0$).  
\texttt{Placement state:\;} \textit{P-greedy}: \{a2$\to$m1,
a0$\to$m0, a3$\to$m2\}, proposed framework: \{a2$\to$m2,
a0$\to$m0, a3$\to$m1\}.
From this point, the two policies face different machine utilizations, hence different
candidate tables at $t = 3$ (\autoref{tab:ex2_cand} ($t{=}3$, left and right).

\paragraph{Step $t = 3$}
Power-greedy sees the candidate table and chooses placing app1$\to$m1.
The Power Consumption Agent recommends app1$\to$m0. The Affinity Agent recommends app1$\to$m2. The Placement Agent computes the gate ($12.85 \times 1.10 = 14.14$) and finds 15.61 exceeds it. It places app1$\to$m0.

The Rearrangement Agent is activated again. Among four placed instances across three machines, it identifies the best operation: swap app2\,(m2) $\leftrightarrow$
app1\,(m0). App1 leaves the now-packed m0 and moves to m2. App2 moves from m2 to m0. Replacing the heavy app with the light one on the loaded machine sharply reduces the cubic penalty. This swap saves $2.21$\,W in cluster power and yields $\Delta A = +1$ because $G(1,2) = 1$.
\texttt{Final placement:\;} P-greedy: \{a2$\to$m1,
a0$\to$m0, a3$\to$m2, a1$\to$m1\}, proposed framework:
\{a0$\to$m0, a2$\to$m0, a3$\to$m1, a1$\to$m2\}.

\subsubsection{Final comparison}

\autoref{tab:ex2_all} (D) summarises the outcomes.  Power-greedy
achieves a cluster power of $50.13 + 64.97 + 63.54 =
178.64$\,W with 1 affinity satisfied placement. The proposed framework achieves $58.17 + 54.96 + 65.15 = 178.28$\,W with 2 affinity satisfied placement. The proposed framework outperforms the dual optimization of P-greedy: 0.36\,W lower power and
one additional affinity point.

In larger instances with denser affinity matrices, the Placement Agent will encounter cases where $G = 1$ candidates fall within the tolerance gate. In those cases, Placement Agent will select them directly, contributing additional affinity that
power-greedy cannot.

\subsection{Comparison of Different LLM-based Architecture}

As shown in \autoref{tab:arch_ablation}, the proposed architecture provides the best overall balance between power consumption and affinity payoff, and is therefore used as the reference configuration. Simpler architectures reduce token usage and the number of LLM calls, but this comes at the expense of weaker placement quality. In particular, collapsing the framework into a single agent or a small number of agents limits the system’s ability to reason explicitly about competing objectives, which leads to less effective trade-offs between energy efficiency and affinity satisfaction.

Introducing specialized agents improves the quality of decisions by separating the optimization objectives and enabling more structured coordination. In particular, the inclusion of dedicated Power and Affinity agents strengthens multi-objective reasoning, while the Placement Agent arbitrates between their recommendations under the tolerance-based policy. The confidence-scoring variant~\cite{xiong2024can,tian2023just}, where agents provide recommendations together with confidence scores instead of explicit debate arguments, improves coordination compared with simpler baselines but still fails to match the full framework. This suggests that confidence values alone are not sufficient to capture the richer interactions and trade-offs exposed through argument-based debate.

Overall, the results indicate that explicit role specialization, debate-based coordination, and post-placement refinement are all important for achieving strong placement quality. Although the full architecture incurs higher token cost and more LLM calls per step, this additional reasoning overhead enables consistently better joint optimization of power consumption and affinity payoff.

As shown in \autoref{fig:comparison}, three configurations (\textit{P-Greedy}, Three Agents and proposed architecture) follow nearly identical power
curves. At low utilization, the cubic term
contributes little to power. Placement
decisions have small $\Delta P$ values with narrow spread, so debate and rearrangement offer minimal advantage. But differences in affinity satisfaction emerge: the proposed architecture accumulates affinity earlier because TGA actively searches the $\varepsilon$-band for $G{=}1$ candidates even when the power savings are small.
As machines fill, the cubic
marginal cost grows quickly. Three Agents and \textit{P-Greedy} lack rearrangement, so suboptimal early placements permanently occupy low-cost slots.  Future steps face uniformly high
$\Delta P$ values. This highlights the importance of Rearrangement Agent.

\begin{table}[t]
\centering
\caption{\footnotesize \textbf{Comparison across different LLM-agentic architectures (\textit{P-Greedy} as base). \protect\footnotemark}}
\label{tab:arch_ablation}
\scriptsize
\renewcommand{\arraystretch}{1.03}
\setlength{\tabcolsep}{3pt}
\resizebox{\columnwidth}{!}{%
\begin{tabular}{@{}lrrrc@{}}
\toprule
\textbf{Agent Architecture} &
\textbf{PC$\downarrow$} &
\textbf{Aff$\uparrow$} &
\textbf{Tokens} &
\textbf{Calls/step} \\
\midrule
Single Agent                & +4.12\%          & +14.22\%          &   201,366 & 1 \\
Two Agents (PlaA+RA)        & -0.97\%          & +19.85\%          &   340,771 & 2 \\
Three Agents (AA+PowA+PlaA) & +2.89\%          & +20.07\%          & 1,860,101 & 3 \\
Confidence Scoring          & -3.17\%          & +36.56\%          &   712,460 & 4 \\
Proposed Architecture       & \textbf{-8.36\%} & \textbf{+39.30\%} & 1,937,743 & 8 \\

\bottomrule
\end{tabular}%
}
\end{table}
\footnotetext{\label{note1}PC denotes power consumption (lower is better), Aff denotes affinity payoff (higher is better), Tokens denote the total tokens consumed over the full scheduling run, and Calls/step denotes the number of LLM invocations per placement step. PowA, AA, PlaA, and RA denote the Power Consumption Agent, Affinity Agent, Placement Agent, and Rearrangement Agent, respectively.}
\setcounter{sharedtabfn}{\value{footnote}}

\begin{figure}
    \centering
    \includegraphics[width=0.95\linewidth]{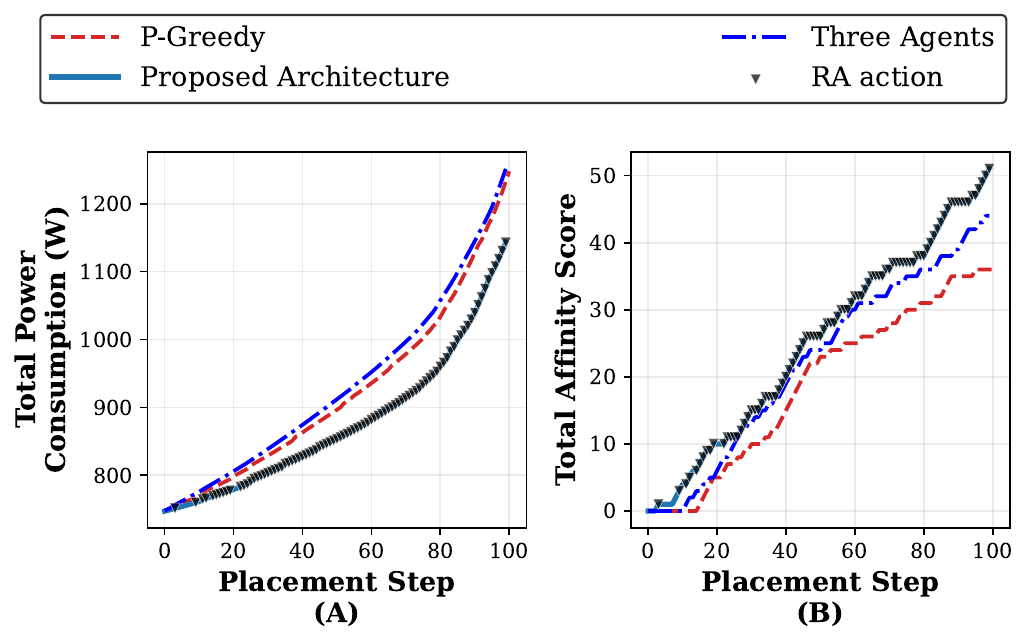}
    \caption{\footnotesize \textbf{(A) Cumulative total power consumption, and (B) cumulative affinity score over 100 placement steps.
    Triangles mark steps where the RA executed a
    Pareto-improving operation.}}
  \label{fig:comparison}
\end{figure}
\subsection{Effect of Variation of Tolerance}

\begin{figure}[tb!]
    \centering
    \includegraphics[width=0.95\linewidth]{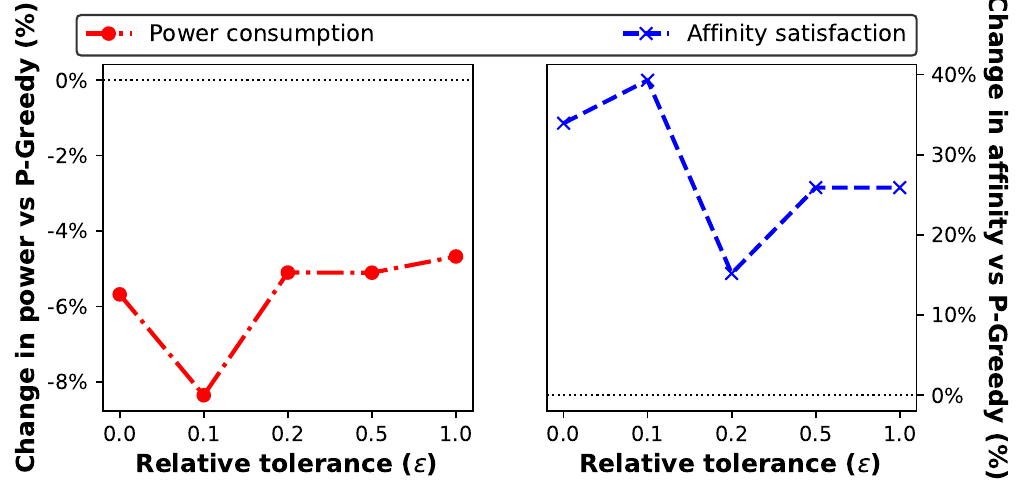}
    %\vspace{-0.5cm}
    \caption{\footnotesize \textbf{Comparison of power consumption and affinity satisfied placement with variation of tolerance ($\varepsilon$) of Placement Agent.}}
    \vspace{-0.5cm}
    \label{fig:epsilon}
\end{figure}

In \autoref{fig:epsilon}, the baseline performance is achieved at $\varepsilon{=}0.1$, which provides the best balance between power consumption and affinity satisfaction. When $\varepsilon$ is set too low or too high relative to this baseline, performance degrades. In particular, $\varepsilon{=}0$ makes the selection process overly rigid by restricting the Placement Agent to only the top-ranked candidates, thereby reducing flexibility and making the framework behave similarly to a fixed-rule heuristic. In contrast, larger tolerance values admit a broader set of inferior candidates, which weakens the quality of agent recommendations and makes the final arbitration less focused. A small non-zero tolerance, therefore, is most effective because it preserves decision quality while still allowing enough flexibility for better long-term placements and subsequent MIR-based rearrangements.

\begin{table}[t]
\centering
\caption{\footnotesize \textbf{Ablation over the number of debate rounds ($N$) and candidate list size ($K$) (\textit{P-Greedy} as base).\textsuperscript{\ref{note1}}}}
\label{tab:ablation}
\begin{tabular}{@{}ccrrrc@{}}
\toprule
$N$ & $K$ & \textbf{PC\,$\downarrow$}
  & \textbf{Aff\,$\uparrow$}
  & \textbf{Tokens}
  & \textbf{Calls/step} \\
\midrule
0 & 5   & -2.11\% & +56.86\% &    325,063 &  4 \\
1 & 5   & -4.25\% & \textbf{+58.52\%} &    752,249 &  4 \\
2 & 5   & -5.86\% & +52.75\% & 1,271,956 &  6 \\
3 & 5   & \textbf{-8.36\%} & +39.30\% & 1,937,743 & 8 \\
\midrule
3 & 1   & -4.00\% & +47.28\% &  1,590,731 &  8 \\
3 & 5   & \textbf{-8.36\%} & +39.30\% &    1,937,743 & 8 \\
3 & all & \textbf{-8.36\%} & +39.30\% &    3,329,739 &  8 \\
\bottomrule
\end{tabular}
\end{table}

\subsection{Effect of Debate Rounds and Candidate List Size}

As shown in \autoref{tab:ablation}, the reference setting $(N,K)=(3,5)$ achieves the best overall trade-off between power consumption and affinity payoff. Reducing the number of debate rounds lowers the reasoning cost (tokens), but also weakens decision quality, since the agents have less opportunity to refine the trade-off between competing objectives. As a result, fewer debate rounds tend to produce more short-sighted placements, while additional rounds improve coordination and lead to better joint decisions.

Similarly, restricting the candidate list size reduces flexibility by exposing the Placement Agent to only a small set of alternatives, which limits the quality of the final choice. Increasing the list size improves decision quality up to a moderate level, after which the gains saturate. Beyond that point, larger candidate lists mainly increase token usage without improving placement quality. This shows that a moderate number of debate rounds and a moderate candidate list size are sufficient to balance reasoning cost and performance effectively.

\subsection{Scalability Analysis} 

As shown in \autoref{tab:scalability}, the proposed framework maintains the best overall performance across all tested problem sizes, indicating that its advantage is not limited to a specific scale of applications or machines. The same general trend observed earlier continues to hold as the system grows: rule-based heuristics remain limited by fixed local decision rules, while refinement-based methods improve the results only partially. In contrast, the proposed agentic framework continues to produce stronger joint placement decisions because it explicitly balances power consumption and affinity satisfaction through coordinated multi-agent reasoning. An important observation is that the relative behavior of the compared methods remains consistent as the workload and infrastructure size increase, which suggests that the proposed design scales robustly with problem size. Although \textit{CPAAP} achieves slightly better affinity in a few cases, this advantage is marginal and should be interpreted in light of the fact that affinity is treated as a soft objective, whereas reducing power consumption remains the primary optimization target. Overall, the results show that the proposed method preserves its effectiveness across varying numbers of applications and machines while maintaining the best overall trade-off between the two objectives.

\begin{table}[t]
\centering
\caption{\footnotesize \textbf{Scalability comparison across different problem sizes. Each tuple $(A,M)$ denotes the number of applications $A$ and machines $M$, respectively. PC denotes total cluster power consumption (lower is better), and Aff denotes total affinity payoff (higher is better) (\textit{P-Greedy} as base).}}
\label{tab:scalability}
\scriptsize
\setlength{\tabcolsep}{2.5pt}
\renewcommand{\arraystretch}{1.03}
\resizebox{\columnwidth}{!}{%
\begin{tabular}{@{}l|ccc|ccc@{}}
\toprule
\multirow{2}{*}{\textbf{Method}} &
\multicolumn{3}{c|}{\textbf{PC$\downarrow$}} &
\multicolumn{3}{c}{\textbf{Aff$\uparrow$}} \\
\cline{2-4}\cline{5-7}
& $(200,50)$ & $(400,150)$ & $(500,200)$
& $(200,50)$ & $(400,150)$ & $(500,200)$ \\
\hline
\hline
First Fit & +14.99\% & +6.23\%  & +15.28\% & +35.20\% & -1.35\%   & +53.49\%  \\
Best Fit  & +4.40\%  & +1.82\%  & +2.38\%  & +0.18\%  & +7.74\%   & +37.66\%  \\
P-Greedy  & 0.00\%   & 0.00\%   & 0.00\%   & 0.00\%   & 0.00\%    & 0.00\%    \\
HOP-CAPM  & -4.22\%  & -6.53\%  & -4.58\%  & +62.14\% & +41.81\%  & +155.05\% \\
CPAAP     & -6.63\%  & -9.39\%  & -5.49\%  & \textbf{+78.46\%} & \textbf{+53.06\%}  & +165.59\% \\
Proposed  & \textbf{-10.94\%} & \textbf{-13.72\%} & \textbf{-10.08\%}
          & +75.13\% & +51.72\% & \textbf{+168.82\%} \\
          \bottomrule
\end{tabular}%
}
\vspace{-0.4cm}
\end{table}

\subsection{Comparison across Different LLMs}

As shown in \autoref{tab:llm_compare}, stronger LLMs consistently yield better placement quality under the same agentic setting. While the overall token usage remains comparable across models, the quality of reasoning has a clear impact on both power consumption and affinity payoff. In particular, larger and more capable models produce better joint decisions because they can more effectively interpret the candidate set, reason about competing objectives, and select placements with better long-term benefit. Among the evaluated models, Claude-based variants achieve the strongest overall results, with Claude Opus performing best, while DeepSeek-R1 also improves substantially over the Llama~3.1 baseline. These results indicate that the effectiveness of the proposed framework depends not only on the agentic design, but also on the reasoning capability of the underlying LLM.

\begin{table}[t]
\centering
\caption{\footnotesize \textbf{Comparison of different LLMs under the same agentic setting. All results are computed with $N=1$ and $K=1$ due to API cost constraints for proprietary models. This reduced configuration is applied uniformly across all LLMs in this table to ensure a fair cross-model comparison. Note that these results are not directly comparable to \autoref{tab:methods}, which uses the full $N=3$ and $K=5$ reference configuration (\textit{P-Greedy} as base).\textsuperscript{\ref{note1}}}}

\label{tab:llm_compare}
\footnotesize
\setlength{\tabcolsep}{3pt}
\renewcommand{\arraystretch}{1.03}
\resizebox{\columnwidth}{!}{%
\begin{tabular}{@{}lrrrr@{}}
\toprule
\textbf{LLM} & \textbf{Parameters}  & \textbf{PC$\downarrow$} & \textbf{Aff$\uparrow$} & \textbf{Tokens}\\
\midrule
 Llama 3.1         & 8B    & -3.36\%              & +35.30\%            & 854,547\\
 DeepSeek-R1       & 7B    & -14.48\%             & +69.13\%            & 934,472\\
 Claude Opus 4.6   & 100B+ & \textbf{-16.63\%}    & \textbf{+74.75\%}   & 890,555\\
 Claude Sonnet 4.6 & 10B+  & -13.77\%             & +59.19\%            & 911,190\\
\bottomrule
\end{tabular}%
}
\vspace{-0.4cm}
\end{table}

\subsection{Token Cost and Practical Overhead}

The reference configuration $(N,K)=(3,5)$ consumes approximately 2M tokens for a full scheduling run with 100 placement decisions. Using Llama~3.1-8B served locally through Ollama on a single GPU, each LLM invocation takes about 7--8 seconds on average, resulting in an end-to-end scheduling time of roughly 90 minutes.

Despite this overhead, the inference cost remains small relative to the operational savings obtained from improved placements. In our setting with 100 applications and 25 machines, the proposed framework reduces power consumption by about 4.26\% over best existing heuristics. The improvement is more when a better LLM is used. Using current cloud pricing as a rough reference, the token cost of one complete
run is approximately \$0.5 for Llama~3.1-8B and about \$10 for Claude~\cite{artificialanalysis2025,anthropic2025pricing}. By contrast, the corresponding reduction in power consumption can translate to around \$100 of operational savings over a representative
execution horizon, depending on the electricity price and workload duration.

This gap becomes even more significant at larger scales. A datacenter typically hosts thousands of servers, depending on rack density and deployment configuration~\cite{crs2025datacenter,rpa2025datacenter}, with each server running multiple containerized applications over long durations. Since annual energy costs in datacenters can range from around one million dollar for smaller facilities to tens of millions of dollars for large-scale deployments \cite{ahmed2021review,eia2025electric,lbnl2024datacenter}. So, a power reduction of 4.26\% translates into substantial operational savings. By comparison, the LLM-based scheduling cost remains relatively small, typically on the order of a few hundred to low thousands of dollars.

\section{Conclusion}

We proposed a four-agent collaboration framework that outperforms existing heuristic and meta-heuristic baselines on both power consumption and
affinity payoff. The natural next step is reducing the inference cost and latency of the agentic pipeline. To this end, future work will pursue three directions. First, we will investigate distilling the learned debate patterns into lightweight policy models that replicate multi-agent
reasoning at a fraction of the inference cost.  Second, we will explore prompt compression and response caching strategies to reduce per-step token consumption while preserving decision quality. Third, we aim to extend the
framework to dynamic workloads where applications arrive and depart over time, requiring online rescheduling.

%\footnotesize
%\setstretch{0.89}
\bibliographystyle{IEEEtran}
\bibliography{main}
\end{document}